\definecolor{LinkColor}{rgb}{0,0,1}
\definecolor{lbcolor}{rgb}{0.85,0.85,0.85}
\definecolor{FrameColor}{rgb}{0.85,0.85,0.85}
\newtheoremstyle{tstyle}
{15pt}	
{5pt}	
{\itshape}	
{}	
{\bfseries}	
{.}	
{0.5em}	
{}	
\theoremstyle{tstyle}
\newtheorem{theorem}{Theorem}
\newtheorem{proposition}{Proposition}
\newtheorem{definition}{Definition}
\newtheoremstyle{cstyle}
{15pt}	
{5pt}	
{}	
{}	
{\bfseries}	
{}	
{0.2222em}	
{}	
\theoremstyle{cstyle}
\g@addto@macro{\thm@space@setup}{\thm@headpunct{}}
\renewenvironment{proof}[1][\proofname]{\par
	\pushQED{\qed}%
	\normalfont \topsep6\p@\@plus6\p@\relax
	\trivlist
	\item[\hskip\labelsep
	\bfseries
	#1\@addpunct{\,}]\ignorespaces
}{%
	\popQED\endtrivlist\@endpefalse
}
\g@addto@macro{\thm@space@setup}{\thm@headpunct{}}
\newenvironment{sketch-proof}[1][Sketch of the proof]{\par
	\pushQED{\qed}%
	\normalfont \topsep6\p@\@plus6\p@\relax
	\trivlist
	\item[\hskip\labelsep
	\bfseries
	#1\@addpunct{\,}]\ignorespaces
}{%
	\popQED\endtrivlist\@endpefalse
}
\def\RR{\mathds R}
\def\NN{\mathds N}
\newcommand\E{\mathcal{E}}
\renewcommand\L{\mathcal{L}}
\renewcommand\P{\mathcal{P}}
\def\eps{\varepsilon}
\def\supp{\textnormal{supp\,}}
\def\BR{{B_R(0)}}
\def\ddt{\frac{\mathrm d}{\mathrm dt}}
\def\ddr{\frac{\mathrm d}{\mathrm dr}}
\def\dtau{\;\mathrm d\tau}
\def\dx{\;\mathrm dx}
\def\dy{\;\mathrm dy}
\def\dv{\;\mathrm dv}
\def\ds{\;\mathrm ds}
\def\dxv{\;\mathrm d(x,v)}
\def\delxi{\partial_{x_i}}
\def\delt{\partial_{t}}
\def\delx{\partial_{x}}
\def\grad{\nabla}
\def\laplace{\Delta}
\def\Mf{\mathbf{f}\hspace{1pt}}
\def\tand{\quad\text{and}\quad}
\def\twith{\quad\text{with}\quad}
\def\a{\alpha}
\def\fini{\mathring f}
\def\suma{\sum_{\a=\pm 1}}
\def\sf{\raisebox{-1pt}{\large{\textnormal{f}}}}
\def\itema{\item[\textnormal{(a)}]}
\def\itemb{\item[\textnormal{(b)}]}
\def\itemc{\item[\textnormal{(c)}]}
\def\itemi{\item[\textnormal{(i)}]}
\def\itemii{\item[\textnormal{(ii)}]}
\def\itemiii{\item[\textnormal{(iii)}]}
\def\itemiv{\item[\textnormal{(iv)}]}
\newcommand{\Underset}[3][0pt]{\ensuremath{\underset{\raise#1\hbox{\small\ensuremath{#2}}}{#3}}}
\newcommand{\Overset}[3][0pt]{\ensuremath{\overset{\raise#1\hbox{\small\ensuremath{#2}}}{#3}}}
\newcommand{\comma}{\,{,}\,}
\begin{document}
	
\begin{center}	
	\LARGE{\bfseries Confined steady states\\ of a Vlasov-Poisson plasma\\ in an infinitely long cylinder}\\[8mm]
	\normalsize{Patrik Knopf}\\[2mm]
	\textit{University of Regensburg, 93040 Regensburg, Bavaria, Germany}\\[2mm]
	\texttt{Patrik.Knopf@mathematik.uni-regensburg.de}\\[-3mm]
	
	\begin{minipage}[h]{0.275\textwidth}
	\begin{flushright}
		\vspace{-2pt}
		\includegraphics[scale=0.05]{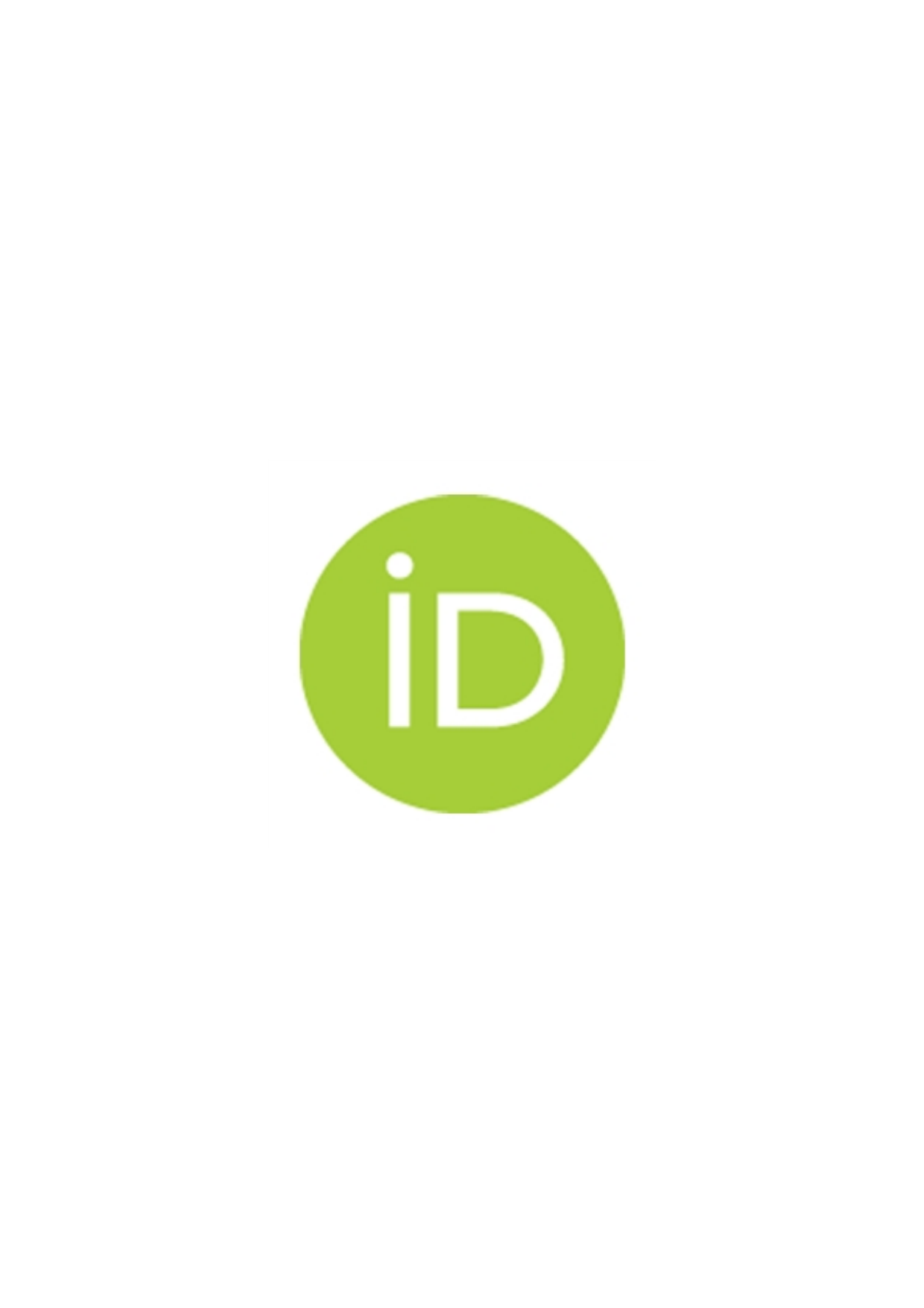} 
	\end{flushright}
	\end{minipage}
	\begin{minipage}[h]{0.5\textwidth}
		\hspace{-12pt}
		\href{https://orcid.org/0000-0003-4115-4885}{orcid.org/0000-0003-4115-4885}
	\end{minipage}

	\bigskip
	\color{white}
	\textit{Please cite as:} \\  Math. Meth. Appl. Sci. (2019) \\ URL
	\color{black}
	\bigskip
\end{center}

\vfill

\abstract{We consider the two-dimensional Vlasov-Poisson system to model a two-component plasma whose distribution function is constant with respect to the third space dimension. First, we will show how this two-dimensional Vlasov-Poisson system can be derived from the full three-dimensional system. The existence of compactly supported steady states with vanishing electric potential in a three-dimensional setting has already been investigated by A.\,L. Skubachevskii  \cite{skubachevskii}. We will show that his approach can easily be adapted to the two-dimensional system. However, our main result is to prove the existence of compactly supported steady states even with a \textit{nontrivial} self-consistent electric potential.\\
\phantom{}\\
\textit{Keywords:} Vlasov-Poisson equation, stationary solutions, nonlinear partial differential equations, magnetic confinement.\\
\phantom{}\\
MSC Classification: 35Q83, 82D10.
}

\vfill




\section{Introduction}

The investigation of a high-temperature plasma under the influence of an exterior magnetic field is essential in fusion research. One of the most promising approaches to generate thermonuclear fusion power is magnetic confinement. The idea is to use magnetic fields to control the plasma in such a way that it keeps a certain distance to the reactor wall. This is necessary because a plasma that collides with the wall will cool down (which makes the fusion process impossible) and will possibly damage the reactor. 
The most common reactor types for magnetic confinement are toroidal devices, such as Tokamaks, Stellerators and reversed field pinches, as well as linear confinement devices, for example z-pinch configurations.\\
\begin{figure}[h!]
	\centering
	\textbf{Rough sketch of a cylindrical device for plasma confinement}\\[2mm]
	\includegraphics[width=0.8\textwidth]{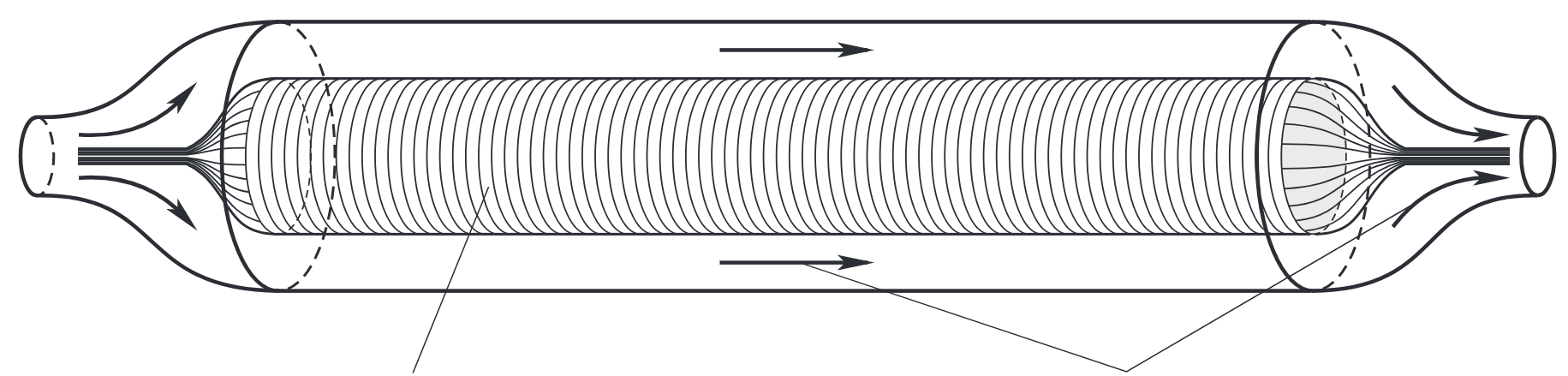}\\[-2mm]
	\textbf{plasma} \hspace{50mm} \textbf{magnetic field}\\ 
	\caption{ (image adapted from Fig.\,2 by A.\,L. Skubachevskii  \cite{skubachevskii}, slightly modified) }
	\label{fig1}
\end{figure}

\newpage
\noindent The results presented in this paper are motivated by such confinement problems. We discuss a simplified mathematical model to describe the behavior of a two-component plasma in a long cylindrical reactor (cf. \autoref{fig1} for a rough illustration). In particular, we intend to prove existence of steady states to the Vlasov-Poisson system that are \textit{confined} inside the reactor if the exterior magnetic field is sufficiently strong. This means that the whole plasma is supposed to be contained in a cylinder with a given radius $R>0$ that is smaller than the radius of the reaction chamber. As the cylindrical device is assumed to be very long, we will suppose (for mathematical reasons) that its length is infinite.\\[1ex]
The motion of a plasma can be described by the kinetic equation\vspace{-1mm}
\begin{align}
\label{BE}
&\;\partial_t \Mf^\a + v\cdot \partial_x \Mf^\a + \frac{\a}{m_\a}F \cdot \partial_v \Mf^\alpha = \left(\frac{\delta f}{\delta t}\right)_\text{coll}, \quad \a=\pm 1
\end{align}
which is called the \textit{Boltzmann equation} \cite{miyamoto}. 
Here $\Mf^\a = \Mf^\a(t,x,v)\ge 0$ (with $t\ge 0, x,v\in\RR^3$) denotes the (scalar) distribution function of positively charged ions ($\alpha=+1$) or of electrons ($\alpha=-1$). For any measurable set $\mathcal M\subset \RR^6 = \RR^3\times\RR^3$, the term\vspace{-1mm}
\begin{align*}
\sum_{\a=\pm 1} \a \int\limits_{\mathcal M}  \Mf^\a(t,x,v)\dxv 
\end{align*}
represents the total charge of all particles having space coordinates $x\in\RR^3$ and velocity coordinates $v\in\RR^3$ with $(x,v)\in\mathcal M$ at time $t\ge 0$.
If $\alpha$ appears in the index, we will just write $+$ or $-$ instead of $+1$ or $-1$. We assume that each positively charged ion (cation) has the charge $e>0$ and the mass $m_+>0$ while the charge and mass of the electrons are given by  $-e<0$ and $m_->0$.\\[1ex]
The term $(\delta f/\delta t)_\text{coll}$ describes effects due to collisions of the particles. If the plasma has a very high temperature and is sufficiently rarefied, one can argue that this collision term may be neglected \cite{miyamoto} (especially if short-time processes are described  \cite{kaufmann,klimontovich}). The collisionless Boltzmann equation is usually referred to as the \textit{Vlasov equation} (which was originally proposed by A.\,A.~Vlasov \cite{vlasov}). \\[1ex]
Moreover, $F=F(t,x)$ denotes a self-consistent force field that is generated by the charged particles via electromagnetic induction. Vice versa, $F$ causes the movement of the particles. It can best be described by Maxwell's equations which leads to the \textit{Vlasov-Maxwell system}. However, in mathematical physics a simplified model where magnetic induction and electrodynamic effects are neglected is also frequently investigated. This means that only electrostatic interactions of the particles are considered and thus, $F=e E$ where $E$ is an electric field described by Coulomb's law. This leads to the so-called  \textit{Vlasov-Poisson system} which has already been studied in various papers in the mathematical literature \cite{kurth,batt,pfaffelmoser,lions-perthame,schaeffer,rein}.\\[1ex]
Furthermore, we add an exterior magnetic field $B=B(t,x)$ to the Vlasov equation which influences the particles via Lorentz force $\tfrac{e}{c} (v\times B)$. Here, the constant $c$ denotes the speed of light. Since magnetic induction caused by the particles is neglected, we also suppose that the magnetic field is not influenced by the plasma dynamics. The overall system of equations reads as follows:
\begin{empheq}[left=\empheqlbrace]{align}
\label{VE}
&\;\partial_t \Mf^\a + v\cdot \partial_x \Mf^\a + \frac{\a e}{m_\a}\left[E + \frac{1}{c}(v\times B) \right]\cdot \partial_v \Mf^\alpha = 0, \quad \a=\pm 1
\\[0.15cm]
\label{PE}
&\;E = -\delx\psi,\quad -\laplace_x \psi = 4\pi e \sum_{\a=\pm 1}\a\,\varrho^\a,
\quad \underset{|x|\to\infty}{\lim}\psi(t,x) = 0 \\
\label{RE}
&\;\varrho^\a(t,x) = \int \Mf^\a (t,x,v) \dv
\end{empheq}
Here, the functions $\varrho^\a$ denote the spatial charge densities associated with $f^\a$.
The scalar function $\psi=\psi(t,x)$ stands for the self-consistent electrostatic potential induced by the charge of the ions and electrons. The corresponding electric field $E=E(t,x)$ is given by $E=-\delx\psi$.\\[1ex]
\noindent Assuming that $\Mf^\a$ is locally integrable, we can solve Poisson's equation explicitly (at least in the sense of distributions) by the Newtonian potential. We obtain
\begin{align*}
\psi(t,x)= \int \frac{1}{|x-y|} \suma \a\varrho^\a(t,y) \dy 
= \iint \frac{1}{|x-y|} \sum_{\alpha=\pm 1} \a\Mf^\a(t,y,w) \;\mathrm dw\mathrm dy
\end{align*}
and hence the Vlasov-Poisson equation can be expressed alternatively by
\begin{empheq}[left=\empheqlbrace]{align}
\label{VEV2}
&\;\partial_t \Mf^\a + v\cdot \partial_x \Mf^\a + \frac{\a e}{m_\a}\left[E + \frac{1}{c}(v\times B) \right]\cdot \partial_v \Mf^\alpha = 0, \quad \a=\pm 1 
\\[0.25cm]
\label{CE}
&\;E(t,x) = -e\iint \frac{x-y}{|x-y|^3} \sum_{\alpha=\pm 1} \a\,\Mf^\a(t,y,w) \;\mathrm dw\mathrm dy. 
\end{empheq}
By imposing the condition
\begin{align}
\label{IC}
\Mf^\a\big\vert_{t=0} = {\mathring\Mf}\phantom{}^\a, \quad \a=\pm 1
\end{align}
for initial distribution functions ${\mathring \Mf}\phantom{}^\a \in C^1_c(\RR^6)$ we obtain an initial value problem. In the absence of an exterior magnetic field ($B= 0$) and for only one sort of particles (without loss of generality $\Mf^+= 0$) a first local existence and uniqueness result to this initial value problem was proved by R. Kurth  \cite{kurth}. Later J. Batt  \cite{batt} established a continuation criterion which claims that a local solution can be extended as long as its velocity support is under control. Finally, two different proofs for global existence of classical solutions were established independently and almost simultaneously, one by K. Pfaffelmoser  \cite{pfaffelmoser} and one by P.-L. Lions and B. Perthame  \cite{lions-perthame}. Later, a greatly simplified version of Pfaffelmoser’s proof was published by J. Schaeffer  \cite{schaeffer}. This means that the following result is established: Any nonnegative initial datum $\mathring \Mf\phantom{}^- \in C^1_c(\RR^6)$ launches a global classical solution $\Mf^- \in C^1([0,\infty[\times\RR^6)$ of the initial value problem (\ref{VE}-\ref{RE},\ref{IC}) (or (\ref{VEV2}-\ref{IC}) respectively). Moreover, for any time $t\in[0,\infty[$, $\Mf^-(t)=\Mf^-(t,\cdot,\cdot)$ is compactly supported in $\RR^6$. For more information we recommend to consider the review article about the Vlasov-Poisson system by G.~Rein  \cite{rein}. For general fields $B\in C([0,T];C^1_b(\RR^3;\RR^3))$ the problem has already been investigated \cite{knopf,knopf-weber}. The existence and uniqueness result for global classical solutions holds true in this case as the Pfaffelmoser-Schaeffer proof can be adapted to this problem. A proof of this assertion was previously given by the author \cite{knopf}. This proof can easily be modified to prove the analogous result for a plasma with two sorts of particles (i.e., $\Mf^+$ is nonzero). \\[1ex]
As stated above, we want to study a plasma in an infinitely long cylinder. Without loss of generality the cylinder axis is supposed to point in the $\vec e_3$-direction. We assume that the distribution function of the plasma does not depend on $x_3$, i.e., it is identical in any cross section parallel to the $(x_1,x_2)$-plane. In particular, $\Mf^\a$ is supposed to have the following shape:
\begin{align*} 
\Mf^\a(t,x,v) = f^\a(t,x_1,x_2,v_1,v_2)\, \chi^\a(v_3), \quad \alpha=\pm 1,\quad (t,x,v)\in[0,T[\times \RR^4
\end{align*}
where $f^\a:[0,T[\times\RR^4\to [0,\infty[$ is continuously differentiable with respect to all its variables, $\supp\,f^\a(t)$ is compact for all $t\in[0,T[$ and
\begin{align*}
\chi^\a \in C^1_c(\RR) 
\twith
\int\limits_{-\infty}^\infty \chi^\a(s) \ds = 1.
\end{align*}
This kind of symmetry can be preserved only if the magnetic field $B$ is also independent of $x_3$. We assume that it points only in the $\vec e_3$-direction, i.e., $B=b\,\vec e_3$ where $b=b(t,x_1,x_2)$ is a scalar function. Then
\begin{align}
\label{B2D}
v\times B = b\, (v_1,-v_2,0)^T.
\end{align}
The electric field can also be simplified subject to this symmetry assumption. From \eqref{CE} we obtain that
\begin{align*}
E_i(t,x_1,x_2) 
&= -e \iiint \frac{y_i}{|y|^3} \;dy_3 \sum_{\alpha=\pm 1} \a\,\rho^\a(t,x_1-y_1,x_2-y_2)\;  \mathrm dy_2\mathrm dy_1
\end{align*}
where
\begin{align}
\rho^\a(t,x_1,x_2) := \int \Mf^\alpha(t,x_1,x_2,0,v)\; \mathrm dv = \iint f^\alpha(t,x_1,x_2,v_1,v_2)\; \mathrm dv_2\mathrm dv_1
\end{align}
denotes the charge density of the particles in the $(x_1,x_2)$-plane (or any plane that is parallel to the $(x_1,x_2)$-plane respectively). We can easily conclude that $E_3 = 0$ as the function $y_3 \mapsto y_3\,|y|^{-3}$ is odd. For $i\in\{1,2\}$ we can compute the $x_3$-integral explicitly and obtain that
\begin{align}
\label{E2D}
\begin{aligned}
E_i(t,x_1,x_2) 
&= -2e \iint \frac{y_i}{y_1^2+y_2^2} \sum_{\alpha=\pm 1} \a\,\rho^\a(t,x_1-y_1,x_2-y_2)\;  \mathrm dy_2\mathrm dy_1\\
&= -2e \iint \frac{x_i-y_i}{(x_1-y_1)^2+(x_2-y_2)^2} \sum_{\alpha=\pm 1} \a\,\rho^\a(t,y_1,y_2)\;  \mathrm dy_2\mathrm dy_1.
\end{aligned}
\end{align}
Now, using \eqref{B2D} and \eqref{E2D} and assuming $f^\a$ to be sufficiently regular, we have
\begin{align*}
\left\{\delt f^\a + \begin{pmatrix}v_1\\v_2\end{pmatrix}\cdot\begin{pmatrix}\partial_{x_1} f^\alpha \\ \partial_{x_2} f^\alpha \end{pmatrix} 
+ \frac{\a e}{m_\a}\left[ \begin{pmatrix} E_1\\E_2 \end{pmatrix} + \frac b c \begin{pmatrix}v_2\\-v_1\end{pmatrix} \right]\cdot \begin{pmatrix}\partial_{v_1} f^\alpha \\ \partial_{v_2} f^\alpha \end{pmatrix}  \right\} \chi^\alpha(v_3) = 0 
\end{align*}
on $\RR^2$ for any $v_3\in\RR$. This leads to the two-dimensional Vlasov-Poisson system
\begin{empheq}[left=\empheqlbrace]{align}
&\;\partial_t f^\a + v\cdot \partial_x f^\a + \frac{\a e}{m_\a}\left[E + \frac{b}{c}v^\bot \right]\cdot \partial_v f^\alpha = 0, \quad \a=\pm 1
\\[0.25cm]
\label{CE2}
&\;E(t,x) = -2e \int \frac{x-y}{|x-y|^2} \sum_{\alpha=\pm 1} \a\,\rho^\a(t,y)\;  \mathrm dy.
\end{empheq}
where $x,y$ and $v$ now denote vectors in $\RR^2$, $v^\bot := (v_2,-v_1)^T$ and ${E:=(E_1,E_2)^T}$. Note that the electric field can also be expressed as a gradient field  $E=-\delx\phi$ with the potential
\begin{align}
\phi(t,x) = -2e \iint \ln(|x-y|)\,\sum_{\alpha=\pm 1} \a\,f^\a(t,y,w)\;  \mathrm dy\mathrm dw.
\end{align}
Then, analogous to the three-dimensional case, $\phi$ is a solution of the two-dimensional Poisson equation
\begin{align}
-\laplace \phi = 4\pi e \,\sum_{\alpha=\pm 1} \a\,\rho^\a.
\end{align}
However, $\phi$ does not satisfy the homogeneous boundary condition from \eqref{PE} but we can find an alternative condition: Recall that $f^\a$ was supposed to be continuously differentiable and $f^\a(t)$ was assumed to be compactly supported for all $t\in [0,T[$. We define the quantity
\begin{align}
M:=\sum_{\a =\pm 1}\a \int \rho^\a(t,x) \dx =\sum_{\a =\pm 1}\a \iint f^\a(t,x,v) \dxv.
\end{align}
By differentiating under the integral followed by integration by parts one can easily show that $\ddt M = 0$, i.e.,$M$ does not depend on time. Now we fix some arbitrary time $t\in [0,T[$. As it was assumed that $f^\a(t)$ is compactly supported in $\RR^4$ there exists $R>0$ such that $\supp f^\alpha(t) \subset B_R(0)$. Let now $x\in\RR^4$ with $|x|\ge 2R$ be arbitrary. Applying the mean value theorem on $\ln(|x|)-\ln(|x-y|)$ we obtain that
\begin{align*}
\big| \phi(t,x) + 2Me\,\ln(|x|) \big| \le \frac{4Me}{|x|} \to 0, \quad |x|\to\infty.
\end{align*}
Hence, we can conclude that $\phi$ is the unique solution of the boundary value problem
\begin{align}
-\laplace \phi = 4\pi e \,\sum_{\alpha=\pm 1} \a\,\rho^\a, \quad \underset{|x|\to\infty}{\lim} \phi(t,x) + 2Me\,\ln(|x|) = 0.
\end{align}
This means that the two-dimensional Vlasov-Poisson system can be formulated alternatively by
\begin{empheq}[left=\empheqlbrace]{align}
\label{VE2}
&\;\partial_t f^\a + v\cdot \partial_x f^\a + \frac{\a e}{m_\a}\left[E + \frac{b}{c}v^\bot \right]\cdot \partial_v f^\alpha = 0, \quad \a=\pm 1 
\\[0.25cm]
\label{PE2}
&\;E = -\delx\phi,\;\; -\laplace \phi = 4\pi e\sum_{\alpha=\pm 1} \a\,\rho^\a, \;\; \underset{|x|\to\infty}{\lim} \phi(t,x) + 2Me\,\ln(|x|) = 0, \\
\label{RE2}
&\;\rho^\a(t,x) = \int f^\a (t,x,v) \dv
\end{empheq}
The adjusted initial condition now reads as follows:
\begin{align}
\label{IC2}
f^\a\big\vert_{t=0} = \fini^\a, \quad a=\pm 1
\end{align}
where $f^\a \in C^1_c(\RR^4),\, \a=\pm 1$ are given initial distributions. Existence and uniqueness of global classical solutions of the initial value problem (\ref{VE2}-\ref{IC2}) in the case $b= 0$ has been established by S. Ukai and T.~Okabe \cite{ukai-okabe}. A similar result for $b= 0$ and only one sort of particles (without loss of generality $f^+=0$) is presented by S. Wollman  \cite{wollman}. Moreover, Wollman showed that the support of the solution is compact and bounded in the following sense: For any $T>0$ there exists $R>0$ such that the classical solution of the initial value problem satisfies the condition
\begin{align*}
\forall t\in[0,T]: \quad \supp f^-(t) \subset \BR.
\end{align*}
This result can easily be generalized to a plasma with two sorts of particles. Certainly, these global existence and uniqueness results remain correct if a nontrivial exterior magnetic field given by $b\in C\big([0,T];C^1_b(\RR^2)\big)$ is added. 
However, in this paper, we will not consider time-dependent but stationary solutions of the two-dimensional Vlasov-Poisson system which are also referred to as steady states. This means that we are looking for solutions of the stationary Vlasov-Poisson system:\\
\begin{empheq}[left=\empheqlbrace]{align}
\label{SVE}
&\;v\cdot \partial_x f^\a + \frac{\a e}{m_\a}\left[E + \frac{b}{c}v^\bot \right]\cdot \partial_v f^\alpha = 0, \quad \a=\pm 1 
\\[0.25cm]
\label{SPE}
&\;E = -\delx\phi,\;\; -\laplace \phi = 4\pi e\sum_{\alpha=\pm 1} \a\,\rho^\a, \;\; \underset{|x|\to\infty}{\lim} \phi(x) + 2Me\,\ln(|x|) = 0, 
\\
\label{SRE}
&\;\rho^\a(x) = \int f^\a (x,v) \dv, \quad \a=\pm 1
\end{empheq}

\bigskip\noindent To be precise, a steady state of the Vlasov-Poisson system is defined as follows:
\pagebreak[3]
\begin{definition}
	$\;$
	\begin{enumerate}
		\itema A pair of functions $f^\a: \RR^2 \times \RR^2 \to [0,\infty[ ,\;\a=\pm 1$ is called a steady state of the Vlasov-Poisson system if the following holds:
		\begin{enumerate}
			\itemi The functions $f^\a$ are continuously differentiable.
			\itemii The induced spatial densities $\rho^\a$ and the electric field $E$ are continuously differentiable. The potential $\phi$ is twice continuously differentiable.
			\itemiii The functions $f^-,f^+,\rho^-,\rho^+,\phi$ and $E$ satisify the stationary Vlasov-Poisson system \textnormal{(\ref{SVE},\ref{SPE},\ref{SRE})} on $\RR^2\times\RR^2$.
		\end{enumerate}
		\itemb We say that a steady state $f^\a$ is compactly supported with radius $R>0$ if for both $\a=\pm1$,
		\begin{align*}
		f^\a(x,v)=0 \quad \text{if}\quad |v|>R \;\;\text{or}\;\; |x|>R.
		\end{align*}
		\itemc We say that a steady state $f^\a$ of the Vlasov-Poisson system has finite charge if for both $\a=\pm 1$, $$\|f^\a\|_{L^1(\RR^4)}<\infty.$$ 
	\end{enumerate}
\end{definition}
\noindent Of course steady states of various Vlasov systems have already been investigated in the literature. Existence of stationary solutions in bounded domains has been established by G.~Rein \cite{rein2} for the Vlasov-Poisson system and by J. Batt and K. Fabian \cite{batt2} for the relativistic Vlasov-Maxwell system. Steady states of the nonrelativistic Vlasov-Maxwell system in one space and two velocity dimensions are studied by Y.~Guo and C.~Grotta~Ragazzo \cite{guo}. Stationary solutions of the Vlasov-Poisson-Boltzmann system are discussed by R.~Duan, T. Yang and C. Zhu \cite{duan} and steady states of the Vlasov-Poisson-Fokker-Planck system are investigated by K. Dressler \cite{dressler} and also R.\,T. Glassey and J. Schaeffer \cite{glassey}. There are also many papers on steady states of the gravitational Vlasov-Poisson system that is used to describe stellar dynamics. We will not itemize all of them but only refer to a paper by T.~Ramming and G.~Rein \cite{ramming} where many important papers concerning this topic are listed in the references section.


\bigskip

\section{Steady states with vanishing electric potential}

In this section we will investigate steady states with vanishing electric potential, i.e., we assume that $\phi=0$. This directly implies that $E=0$ and from Poisson's equation $-\laplace\phi =4\pi e (\rho^+ - \rho^-)$ we obtain that $\rho^- = \rho^+$ which also means $M=0$. Hence the stationary Vlasov-Poisson system with vanishing electric potential reads as follows:
\begin{empheq}[left=\empheqlbrace]{align}
\label{SVEV}
&\;v\cdot \partial_x f^\a + \frac{\a e}{cm_\a}\, b\, v^\bot \cdot \partial_v f^\alpha = 0, \quad \a=\pm 1 
\\
\label{SPEV}
& \;\rho^- = \rho^+ \\
\label{SREV}
&\;\rho^\a(x) = \int f^\a (x,v) \dv , \quad \a = \pm 1
\end{empheq}
For the three-dimensional Vlasov-Poisson system, steady states of that type have already been investigated by A.\,L.\,Skubachevskii  \cite{skubachevskii}. Here, for the two-dimensional system, only slight modifications to his approach are necessary. 	We assume that $b\equiv \beta$ for some constant $\beta>0$. Then 
\begin{gather*}
\E^\a(x,v):=\tfrac 1 2 m_\a |v|^2,\\[2mm]
\L^\a(x,v):= \tfrac 1 2\big|x+\a\tfrac{cm_\a}{e\beta}v^\bot\big|^2,\\[2mm]
\P^\a(x,v):= \L^\a(x,v) - \tfrac{c^2 m_\a}{e^2 \beta^2} \E^\a(x,v)=\tfrac 1 2 |x|^2 + \a \tfrac{cm_\a}{e\beta}x\cdot v^\bot 
\end{gather*}
are solutions of the stationary Vlasov equation \eqref{SVEV} as they are constant along solutions of the characteristic system
\begin{align*}
\dot x = v,\quad \dot v = \tfrac{\a e}{cm_\a}\, \beta\, v^\bot.
\end{align*}
In a physical sense, $\E^\a(x,v)$ represents the kinetic energy of a particle at the point $x$ with velocity $v$. Because of the vanishing electric potential it is also the total energy. The quantity $m_\a x\cdot v^\bot$ denotes the angular momentum of such a particle. The idea behind this is to construct steady states that depend only on these conserved quantities. The following theorem presents three different types steady states where type (c) is especially interesting as it is compactly supported and has finite charge.
\begin{theorem}
	$\,$
	\begin{itemize}
		\itema Let $\sigma_1^\a \in C^1([0,\infty[),\; \a=\pm 1$ be nonnegative functions which satisfy
		\begin{align}
		\forall \eta\ge 0:\; \sigma_1^-(\eta\, m_-) = \sigma_1^+(\eta\, m_+)
		\quad\tand\quad
		\sigma_1^\a(\eta) = 0, \;\;\a=\pm 1, \;\;\text{if}\;\; \eta>E_0 
		\end{align}
		for some constant $E_0>0$. Then $f_1^\a:=\sigma_1^\a\circ \E$ is a steady state of the Vlasov-Poisson system.
		\itemb Let $\sigma_2^\a \in C^1([0,\infty[),\;\a=\pm 1$ be nonnegative functions which satisfy
		\begin{align}
		\label{COND1}
		\begin{gathered}
		\forall \lambda\in\RR: \;\frac{\sigma_2^-(\lambda)}{m_-^2}
		= \frac{\sigma_2^+(\lambda)}{m_+^2} \tand
		\sigma^\a_2(\lambda)=0, \;\;\a=\pm 1, \;\;\text{if}\;\; \lambda>L_0
		\end{gathered}
		\end{align}
		for some constant $L_0>0$. Then $f_2^\a:=\sigma_1^\a\circ \L^\a$ is a steady state of the Vlasov-Poisson system.
		\itemc We set $m:=\min\{m_-,m_+\}$, $\bar m:=\max\{m_-,m_+\}$. Suppose that
		$$\beta\ge \frac{2c\bar m}{e} $$ 
		and let $\sigma_3^\a \in C^1([0,\infty[\times\RR),\; \a=\pm 1$ be nonnegative functions which satisfy
		\begin{gather}
		\label{COND2}
		\begin{gathered}
		\forall \eta\ge 0, p\in\RR :\;\frac{\sigma_3^-\big(\eta\, m_-^{-1}, p\big)}{m_-^2} = \frac{\sigma_3^+\big(\eta\, m_+^{-1}, p\big)}{m_+^2} \quad\tand\quad
		\sigma_3^\a(\eta,p) = 0, \;\;\a=\pm 1, \;\;\text{if}\;\; \eta>E_0 \;\text{or}\;p>P_0
		\end{gathered}
		\end{gather}
		for some constants $E_0,P_0>0$. Then $f_3^\a:=\sigma_3^\a\circ (\E,\P^\a)$ is a steady state of the Vlasov-Poisson system that is compactly supported with radius $\sqrt{4P_0+2 E_0/m}$ and has finite total charge. In particular,
		\begin{align*}
		f_3^\a(x,v) = 0\quad\text{if}\quad |v|>\sqrt{2E_0/m} \quad\text{or}\quad |x|>\sqrt{4P_0+2E_0/m}.
		\end{align*}
		
	\end{itemize}
\end{theorem}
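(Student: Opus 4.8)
The plan is to treat all three parts through the same two-step scheme: first solve the stationary Vlasov equation \eqref{SVEV}, then verify the neutrality identity \eqref{SPEV}. The first step is essentially free. Since $\E^\a$, $\L^\a$ and $\P^\a$ are constant along the characteristic flow $\dot x=v$, $\dot v=\tfrac{\a e}{cm_\a}\beta\,v^\bot$, the chain rule shows that any $C^1$ function of them is annihilated by the transport operator $v\cdot\partial_x+\tfrac{\a e}{cm_\a}\beta\,v^\bot\cdot\partial_v$; hence $f_1^\a,f_2^\a,f_3^\a$ all solve \eqref{SVEV}, and each is $C^1$ as a composition of $C^1$ maps. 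Once $\rho^-=\rho^+$ is established, the choice $\phi\equiv 0$, $E\equiv 0$ (recall $M=0$) satisfies \eqref{SPE} and the regularity requirements on $E$ and $\phi$ trivially, so the whole construction reduces to computing the densities $\rho^\a$.

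For (a) I would use polar coordinates in $v$ and the substitution $\eta=\tfrac12 m_\a|v|^2$ to obtain the $x$-independent value $\rho_1^\a=\tfrac{2\pi}{m_\a}\int_0^\infty\sigma_1^\a(\eta)\,\mathrm d\eta$; the change of variable $\eta=\zeta m_\a$ then turns the prescribed relation $\sigma_1^-(\zeta m_-)=\sigma_1^+(\zeta m_+)$ directly into $\rho_1^-=\rho_1^+$. For (b) the affine substitution $w=x+\a\tfrac{cm_\a}{e\beta}v^\bot$ has constant Jacobian $(\tfrac{cm_\a}{e\beta})^2$ and again yields an $x$-independent density $\rho_2^\a=\tfrac{2\pi e^2\beta^2}{c^2m_\a^2}\int_0^\infty\sigma_2^\a(\lambda)\,\mathrm d\lambda$, whereupon the first relation of \eqref{COND1} gives neutrality after integration in $\lambda$. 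In both cases the density is constant in $x$, so these steady states are not confined; they merely illustrate the method.

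Part (c) is the heart of the matter. The key idea is to pass to the momentum variable $\xi=m_\a v$, so that $\E^\a=\tfrac{|\xi|^2}{2m_\a}$ and $\P^\a=\tfrac12|x|^2+\a\tfrac{c}{e\beta}\,x\cdot\xi^\bot$, where crucially the coefficient $\tfrac{c}{e\beta}$ in front of $x\cdot\xi^\bot$ no longer depends on the species. With $\mathrm dv=m_\a^{-2}\,\mathrm d\xi$ the density reads $\rho_3^\a(x)=\int m_\a^{-2}\,\sigma_3^\a\big(\tfrac{|\xi|^2}{2m_\a},\,\tfrac12|x|^2+\a\tfrac{c}{e\beta}x\cdot\xi^\bot\big)\,\mathrm d\xi$. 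Taking $\eta=\tfrac12|\xi|^2$ in \eqref{COND2} shows that $m_\a^{-2}\sigma_3^\a(\tfrac{|\xi|^2}{2m_\a},p)$ is the same function of $(|\xi|,p)$ for both species, so the two densities can differ only through the sign $\a$ multiplying $x\cdot\xi^\bot$. Reflecting $\xi$ across the line $\RR x$ --- an orthogonal map that preserves $|\xi|$ but flips the sign of $x\cdot\xi^\bot$ --- identifies $\rho_3^-$ with $\rho_3^+$, which is exactly the neutrality \eqref{SPEV}.

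Finally, I would read off the support in (c) from the cutoffs in \eqref{COND2}. The condition $\sigma_3^\a(\eta,p)=0$ for $\eta>E_0$ forces $\tfrac12 m_\a|v|^2\le E_0$, hence $|v|\le\sqrt{2E_0/m}$. The cutoff for $p>P_0$ gives $\tfrac12|x|^2\le P_0+\tfrac{cm_\a}{e\beta}|x|\,|v|$, and the hypothesis $\beta\ge\tfrac{2c\bar m}{e}$ makes $\tfrac{cm_\a}{e\beta}\le\tfrac12$, so that $|x|^2\le 2P_0+|x|\sqrt{2E_0/m}$; solving this quadratic inequality and estimating its positive root yields $|x|\le\sqrt{4P_0+2E_0/m}$. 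Continuity together with the compact support then give $\|f_3^\a\|_{L^1}<\infty$, and $\rho_3^\a\in C^1$ follows by differentiating under the now compactly supported $v$-integral. The only genuinely non-routine points are the momentum scaling and the reflection that produce neutrality in (c), together with the clean passage from the quadratic bound to the stated radius; everything else is bookkeeping.
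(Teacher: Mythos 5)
Your proposal is correct and follows essentially the same route as the paper: conserved quantities give the stationary Vlasov equation for free, neutrality $\rho^-=\rho^+$ is obtained by a linear change of variables in the $v$-integral using the mass-scaling conditions, and the support bound comes from the same lower estimate on $\P^\a$ under $\beta\ge 2c\bar m/e$. The only cosmetic differences are that the paper uses the single substitution $w=\a\tfrac{cm_\a}{e\beta}v^\bot$, which absorbs the mass scaling and the sign $\a$ simultaneously (instead of your momentum scaling $\xi=m_\a v$ followed by a reflection across the line $\RR x$), and it bounds $\P^\a$ from below via Young's inequality rather than solving your quadratic inequality in $|x|$ and estimating its positive root.
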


\begin{proof}
	It follows from the chain rule that $f^\a_1$ is also a solution of \eqref{SVEV}. Its corresponding charge density, that is given by \eqref{SREV}, satisfies
	\begin{align*}
	\rho_1^-(x) \equiv \int \sigma_1^-\left(\tfrac 1 2 m_- |v|^2\right)\dv = \int \sigma_1^+\left(\tfrac 1 2 m_+ |v|^2\right)\dv \equiv\rho_1^+(x), \quad x\in\RR^2,
	\end{align*}
	and thus \eqref{SPEV} is verified. Hence, $f^\a_1$ is a steady state of the Vlasov-Poisson system which proves (a). Of course, $f^\a_2$ also satisfies \eqref{SVEV}. For all $x\in\RR^2$ we obtain by the changes of variables $w=\a\tfrac{cm_\a}{e\beta}\,v^\bot$ that
	\begin{align*}
	\rho_2^\a(x) &= \int \sigma_2^\a\big( \L^\a(x,v) \big) \dv 
	= \int \sigma_2^\a\left( \tfrac 1 2\big|x+\a\tfrac{cm_\a}{e\beta}v^\bot\big|^2\right) \dv 
	= \tfrac{\beta^2 e^2}{c^2}\int {m_\a^{-2}}\; \sigma_2^\a\left(\tfrac 1 2 |x+w|^2 \right) \mathrm dw\\
	& = \tfrac{\beta^2 e^2}{c^2}\int {m_\a^{-2}}\; \sigma_2^\a\left( \tfrac 1 2 |w|^2 \right) \mathrm dw
	= \tfrac{\beta^2 e^2}{c^2}\int\limits_0^{L_0} \frac{\sigma_2^\a (\lambda )}{m_\a^2\sqrt{2\lambda}}\; \mathrm d\lambda.
	\end{align*}
	This means that $\rho^\a$ is constant and it follows from condition \eqref{COND1} that $\rho^+=\rho^-$ and thus $E= 0$.  This proves item (b).  Obviously, $f^\a_3$ is also a solution of the stationary Vlasov equation \eqref{SVE} due to product rule and chain rule. Now, substituting $w=\a\tfrac{cm_\a}{e\beta}\,v^\bot$ yields
	\begin{align*}
	\rho_3^\a(x) 
	&= \int \sigma_3^\a\big(\E(x,v),\P^\a(x,v)\big) \dv 
	= \int \sigma_3^\a\left(\tfrac 1 2 m_\a |v|^2 \comma \tfrac 1 2\big|x+\a\tfrac{cm_\a}{e\beta}v^\bot\big|^2 - \tfrac 1 2\big(\tfrac{cm_\a}{e\beta}|v|\big)^2 \right) \dv \\
	&= \tfrac{\beta^2 e^2}{c^2}\int {m_\a^{-2}}\; \sigma_3^\a\Big( \tfrac{\beta^2 e^2}{2c^2} m_\a^{-1}|w|^2 \comma \tfrac 1 2 |x+w|^2 - \tfrac 1 2|w|^2 \Big) \mathrm dw
	\end{align*}
	and consequently $\rho^+=\rho^-$ due to condition \eqref{COND2}. Hence, $f_3^\a$ is a steady state of the Vlasov-Poisson system. As $\sigma_3^\a(\eta,p)=0$ for $\eta>E_0$, it directly follows that $f_3^\a(x,v)=0$ if $|v|>\sqrt{2E_0/m}$. Now, since $\beta\ge 2c\bar m/e,$ we can conclude that for all $(x,v)\in\RR^4$ with $|v|\le \sqrt{2E_0/m}$ and $|x|>|v|$,
	\begin{align*}
	\P^\a(x,v) \;\ge\; \tfrac 1 2 |x|^2 - \tfrac{m_\a c}{e\beta}|x||v| \;\ge\; \tfrac 1 2 |x|^2 - \tfrac 1 2 |x||v| \;\ge\; \tfrac 1 2 |x|^2 - \tfrac 1 4 |x|^2 - \tfrac 1 4 |v|^2 \;\ge\; \tfrac 1 4 |x|^2 -\tfrac 1 {2m} E_0 .
	\end{align*}
	Thus, we obtain that $\P^\a(x,v)>P_0$ for all $(x,v)\in\RR^4$ with $|v|\le \sqrt{2E_0/m}$ and $|x|>\sqrt{4P_0+2E_0/m}$. Hence, by condition \eqref{COND2}, 
	\begin{align*}
	f_3^\a(x,v) = 0\quad\text{if}\quad |v|>\sqrt{2E_0/m} \quad\text{or}\quad |x|>\sqrt{4P_0+2E_0/m}.
	\end{align*}
	Now, since $f^\a_3$ is continuous and has compact support we can easily conclude that its charge $\|f_3^\a\|_{L^1(\RR^4)}$ is finite. This proves item (c).
\end{proof}

Note that for any given radius $R>0$, the constants $E_0$ and $P_0$ can be adjusted in such a way that $\sqrt{4P_0+2E_0/m}<R$. This means that, if the radius $R$ is chosen smaller than the radius of the reaction chamber, the plasma described by the steady state is confined in the cylindrical reactor as it keeps a certain distance to the boundary. 

\section{Steady states with a nontrivial electric potential}

Now, we want to investigate steady states in the general case where the self-consistent electric potential does not vanish. This is a lot more difficult as we have to ensure that Poisson's equation (which does not degenerate in this case) is also satisfied. Moreover, we observe that the microscopic kinetic energy $\tfrac 1 2 m_\a |v|^2$ is no more a conserved quantity of the characteristic flow as the potential energy also has to be taken into account. But even the total energy $\tfrac 1 2 m_\a |v|^2 + \a e \phi(x) $ is, in general, not constant along solutions of the characteristic system and thus it does not satisfy the stationary Vlasov equation. 
However, it actually \textit{is} a conserved quantity of the characteristic flow if the potential $\phi$ depends only on $|x|$. Therefore, we are looking for solutions of the stationary Vlasov-Poisson system (\ref{SVE},\ref{SPE},\ref{SRE}) that are cylindrically symmetric. These are solutions that are invariant under simultaneous rotations of both $x$ and $v$, i.e.
\begin{align*}
f^\a(x,v) = f^\a(Ax,Av) \quad\text{for all}\; A\in SO(2) \;\text{and}\; \a=\pm 1.
\end{align*}
We can conclude that a cylindrically symmetric solution depends only on ${r=|x|}$, $u=|v|$ and the angle $\theta$ between $x$ and $v$. This means that the solution $f^\a:[0,\infty[\times\RR^4\to[0,\infty[$ can be expressed by a function 
\begin{align*}
\sf^\a:[0,\infty[^2\times[0,2\pi[\;\to[0,\infty[ \twith \sf^\a(r,u,\theta)=f^\a(x,v) .
\end{align*}
In general we will use the non-italic version of a (roman or greek) letter to denote the equivalent description of a cylindrically symmetric function by the variables $r$, $u$ or $\theta$. We will now show that then the quantities $\rho^\a$ and $\phi$ depend only on $r$. First we can express $\rho^\a$ by polar coordinates:
\begin{align}
\rho^\a(x)=\int\limits_{\RR^2} f^\a(x,v)\dv = \int\limits_0^\infty \int\limits_0^{2\pi} \sf^\a(r,u,\theta)\; u\; \mathrm d\theta\mathrm du =:\uprho^\a(r).
\end{align}
If the solution $f^\a$ is compactly supported, so is $\rho^\a$. In this case, there exists some radius $R>0$ that does not depend on $\a$ such that $\uprho^\a(r)=0$ for $r>R$. 
Recall that the two-dimensional Laplace operator for functions that depend only on $r$ can be expressed by $\laplace = r^{-1} \ddr \left( r\ddr \right)$. Hence, we are now looking for a solution $r\mapsto \upphi(r)$ of the ordinary differential equation
\begin{align}
\label{ODE:PHI}
\ddr \left(r\ddr \upphi(r) \right) = -4\pi e \, r \uprho(r), \quad r \ge 0
\end{align}
where $\uprho:=\uprho^+-\uprho^-$. Because of symmetry, it must hold that $\upphi'(0)=0$. We will also assume that $\upphi(0)=0$ and omit the boundary condition at infinity instead. Applying the fundamental theorem of calculus twice on \eqref{ODE:PHI} we obtain that  $\upphi:[0,\infty[\to\RR$ must satisfy the integral equation
\begin{align}
\upphi(0)=0,\qquad \upphi(r)= -4\pi e \int\limits_0^r \frac 1 s \int\limits_0^s \tau\uprho(\tau)\dtau\ds, \;\; r>0.
\end{align}
Note that $\upphi$ is then continuously differentiable on $[0,\infty[$ with\vspace{-2mm}
\begin{align}
\upphi'(0)=0,\qquad \upphi'(r)= -\frac{4\pi e}{r} \int\limits_0^r \tau\uprho(\tau)\dtau, \;\; r>0.
\end{align}
Moreover, the function $r\mapsto r\upphi'(r)$ is also continuously differentiable on $[0,\infty[$ and \eqref{ODE:PHI} is satisfied in the classical sense. But how does this definition of $\upphi$ fit together with the definition of the potential $\phi$ in Cartesian coordinates? First recall that both functions $x\mapsto \phi(x)$ and $x\mapsto \upphi(|x|)$ are a solution of Poisson's equation 
\begin{align*}
- \laplace_x \phi = 4\pi e \suma \a \rho^\a
\end{align*}
by construction. This means that they can differ only by a harmonic function $h$, i.e.,$\phi(x)=\upphi(|x|)+h(x)$, $x\in\RR^2$. Since $\grad_x \phi(x)\to 0$ as $|x|\to\infty$ and \vspace{-2mm}
\begin{align*}
|\grad_x \upphi(|x|)| = |\upphi'(|x|)| \le \frac{4\pi e}{|x|} \int\limits_0^R \tau |\uprho(\tau)| \dtau \to 0, \quad |x|\to\infty
\end{align*}
we can conclude that $\grad_x h(x)\to 0 $ as $|x|\to\infty$. As the partial derivatives $\delxi h$, $i=1,2$ are still harmonic it follows that $\grad_x h = 0$ and thus $h$ is constant. Hence, $x\mapsto \phi(x)$ and $x\mapsto \upphi(|x|)$ differ only by a constant and therefore they induce the same electric field that can be expressed by
\begin{align*}
E(x) = -\grad_x \phi(x) \quad\text{or}\quad E(x) = \begin{cases} \;-\frac{x}{|x|} \upphi'(|x|) &\text{if}\; x\neq 0 \\ \; 0 &\text{if}\; x=0 \end{cases}
\end{align*}
respectively. In the following we will interpret the expression $\frac{x}{|x|} \upphi'(|x|)$ at the point $x=0$ as zero.\\[1ex]

Using these results, the stationary Vlasov-Poisson system (\ref{SVE},\ref{SPE},\ref{SRE}) can be transformed into cylindrical coordinates. The transformed system reads as follows:
\begin{empheq}[left=\empheqlbrace]{align}
&\;\begin{aligned}
\label{CSVE}
&\;u\cos\theta\, \partial_r\sf^\a(r,u,\theta)
- \frac{\a e}{m_\a} \cos\theta\,\upphi'(r)\, \partial_u\sf^\a(r,u,\theta)\\
&\hspace{20pt} + \left[ \frac{\a e}{m_\a}\left( \frac{\sin\theta}{u}\upphi'(r) - \frac{\beta}{c} \right) 
- \frac u r \sin\theta\right] \partial_\theta\sf^\a(r,u,\theta)
= 0, \quad \a=\pm 1,
\end{aligned}
\\[1ex]
\label{CSPE}
&\;\upphi'(0)=0,\qquad \upphi'(r):= -\frac{4\pi e}{r} \suma \a \int\limits_0^r \tau\uprho^\a(\tau)\dtau, \;\; r>0,
\\[1ex]
\label{CSRE}
&\;\uprho^\a(r) = \int \sf^\a(r,u,\theta)\, u\;\mathrm d\theta\mathrm du, \quad \a=\pm 1.
\end{empheq}
Note that, if $f^\a(x,v)=\sf^\a(r,u,\theta)$, equation \eqref{CSVE} is equivalent to the Cartesian formulation
\begin{align}
\label{CSVE2}
&\;v\cdot \partial_x f^\a + \frac{\a e}{m_\a}\left[-\frac{x}{|x|}\upphi'(|x|) + \frac{b}{c}v^\bot \right]\cdot \partial_v f^\alpha = 0, \quad \a=\pm 1 .
\end{align}
which is sometimes more convenient. \\[1ex]
Now, the goal is to find a cylindrically symmetric solution $f^\a(x,v)=\sf^\a(r,u,\theta)$ of this system (\ref{CSVE},\ref{CSPE},\ref{CSRE}). 
The basic strategy to construct such a steady state is the following: We suppose that $\upphi=\upphi(r)$ is a given cylindrically symmetric electric potential and that the magnetic field is given by a constant $b\equiv \beta >0 $. Then the quantities 
\begin{align*}
\E^\a(x,v)&\;:=\; \tfrac 1 2 m_\a |v|^2 + {\a e}\, \upphi(|x|) &&\hspace{-32mm}= \tfrac 1 2 m_\a u^2 + \a e\, \upphi(r) 
&&\hspace{-32mm} =: \text{E}^\a(r,u,\theta),\\[2mm]
\P^\a(x,v)&\;:=\; \tfrac 1 2 |x|^2 + \alpha \tfrac{cm_\a}{e\beta}x\cdot v^\bot &&\hspace{-32mm}= \tfrac 1 2 r^2 + \alpha \tfrac{cm_\a}{e\beta}\, ru \sin\theta &&\hspace{-32mm} =: \text{P}^\a(r,u,\theta)
\end{align*}
are constant along solutions of the characteristic system
$$ \dot x = v, \quad \dot v = \tfrac{\a e}{m_\a}\left[ - \tfrac{x}{|x|} \upphi'(|x|)+ \tfrac \beta c v^\bot\right] $$
and hence, they are solutions of the stationary Vlasov equation \eqref{CSVE}. The quantity $\E^\a(x,v)$ describes the total energy and $m_\a x\cdot v^\bot$ denotes the angular momentum of a single particle at the point $x$ with velocity $v$. \\[1ex]
This leads to the following ansatz for a steady state:
\begin{align}
\label{ANS}
f^\a(x,v) = \sigma^\a \big(\E^\a(x,v) \comma \P^\a(x,v)\big), \quad a=\pm 1
\end{align}
where the functions $\sigma^\a \in C^1_b(\RR;[0,\infty[), \; \a=\pm 1$ are supposed to satisfy the condition
\begin{align}
\label{COND:SIGMA} 
\left\{
\begin{aligned}
&\sigma^\a(\eta,p) = 0 &&\text{if}\; \eta\ge E_0 \;\text{or}\; p\ge 0,\\
&\sigma^\a(\eta,p) > 0 &&\text{if}\; e_0<\eta<E_0\;\text{and}\; p_0<p< 0,\\
&|\sigma^\a(\eta,p)| \le \Sigma^\a(\eta) &&\text{for all}\; (\eta,p)\in\RR^2,
\end{aligned}
\right.
\end{align}
for constants $E_0>0$, $e_0<E_0$, $p_0<0$ and given functions $\Sigma^\a\in C(\RR;[0,\infty[)\cap L^1(\RR)$, $\a=\pm 1$.  By product rule and chain rule one can easily show that $f^\a$ satisfies the stationary Vlasov equation \eqref{CSVE}. However, to obtain a self-consistent steady state of the Vlasov-Poisson system, Poisson's equation \eqref{CSPE} must be satisfied as well. By this ansatz the cylindrically symmetric charge density ${\uprho^\a = \uprho^\a(r)}$ becomes a functional of the potential $\upphi$. It is given by
\begin{align*}
\uprho^\a(r) = \int\limits_0^\infty \int\limits_0^{2\pi} \sigma^\a \Big( \tfrac 1 2 m_\a u^2 + \a e\, \upphi(r) \comma
\tfrac 1 2 r^2 + \alpha \tfrac{cm_\a}{e\beta}\, ru\sin\theta\Big) \; u \;\mathrm d\theta\mathrm du 
\end{align*} 
By the change of variables $E=\frac 1 2 m_\a u^2 + \a e\, \upphi(r)$ we obtain that 
\begin{align}
\label{EQ:RHOG}
\uprho^\a(r) = g^\a\big(r,\upphi(r)\big) ,\quad r\ge 0
\end{align} 
where the function $g^\a:\RR^2\to [0,\infty[$ is defined by 
\begin{align}
\label{DEF:G}
g^\a(r,\varphi) = 
\tfrac{1}{m_\a}
\int\limits_{\a e\varphi}^{E_0}  
\int\limits_0^{2\pi} \sigma^\a\Big(E \comma \tfrac 1 2 r^2 + \tfrac{\a cm_\a}{e\beta}\, r \sqrt{\tfrac{2}{m_\a}\big(E-{\a e} \varphi\big)}\sin\theta\Big) \,\mathrm d\theta \mathrm dE 
\qquad\text{if}\;\; r\ge 0 \;\;\text{and}\;\;\a e \varphi < E_0,
\end{align} 
and $g^\a(r,\varphi)=0$ else. Now the problem reduces to finding a suitable potential $\upphi$ which satisfies the integral equation 
\begin{align}
\label{ANS:POT}
\upphi(0)=0,\qquad
\upphi(r) =-4\pi e \int\limits_0^r \frac 1 s \int\limits_0^s \tau \, g\big(\tau,\upphi(\tau)\big) \ds,\quad r>0
\end{align}
where $g:=g^+-g^-$. If the function $\upphi \in C^2([0,\infty[)$ satisfies this integral equation, the ansatz \eqref{ANS} actually \textit{defines} a steady state of the Vlasov-Poisson system whose induced electric potential is $\upphi$. We set
\begin{align*}
m:=\min\{m_-,m_+\}, \quad \bar m:=\max\{m_-,m_+\}
\end{align*}
and in the following we will assume that the exterior magnetic field is sufficiently strong, in particular,
\begin{align}
\label{COND:BETA}
\beta\ge \frac{2c\bar m}{e}.
\end{align}
One can show that the ansatz \eqref{ANS} yields a compactly supported pair of functions $f^\a$ if the chosen potential $\upphi$ satisfies a certain growth estimate:

%
%

\begin{proposition}
	\label{PROP:CSUPP}
	Suppose that $E_0,c_*\ge 0$ and $R>0$ satisfy the equation
	\begin{align}
	\label{CER}
	\tfrac{ec_*}{m} + \sqrt{\tfrac{2}{m}E_0+\left(\tfrac{ec_*}{m}\right)^2} = R,
	\end{align}
	let $\upphi\in C([0,\infty[)$ be any given potential which satisfies
	\begin{align}
	|\upphi(r)|\le c_*r,\quad r\ge 0
	\end{align} 
	and let $f^\a$ be defined by the ansatz \eqref{ANS}. Then, for both $\a=\pm 1$, $f^\a$ is compactly supported with radius $R$ and has finite charge. Furthermore, $\uprho^\a = g^\a\big(\cdot,\upphi(\cdot)\big)$ is compactly supported with
	\begin{align*}
	\uprho^\a(r) = g^\a\big(r,\upphi(r)\big) = 0 \quad \text{if}\;\; r>R.
	\end{align*} 
\end{proposition}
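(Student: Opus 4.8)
The plan is to derive, from the two cutoff conditions in \eqref{COND:SIGMA}, pointwise bounds that force both $|x|$ and $|v|$ to stay below $R$ whenever $f^\a(x,v)\neq 0$. I would work directly with the ansatz \eqref{ANS}: since $f^\a = \sigma_1^\a(\E^\a)\,\sigma_2^\a(\P^\a)$ with both factors nonnegative, the product vanishes as soon as either factor does. Hence $f^\a(x,v)\neq 0$ forces $\E^\a(x,v)<E_0$ \emph{and} $\P^\a(x,v)<0$ simultaneously.

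First I would exploit the angular-momentum cutoff. From $\P^\a(x,v)<0$ we get $\tfrac12|x|^2 < -\a\tfrac{cm_\a}{e\beta}\,x\cdot v^\bot \le \tfrac{cm_\a}{e\beta}|x||v|$, using $|v^\bot|=|v|$ and the Cauchy--Schwarz inequality. The strong-field hypothesis \eqref{COND:BETA}, namely $\beta\ge 2c\bar m/e$, yields $\tfrac{cm_\a}{e\beta}\le \tfrac{m_\a}{2\bar m}\le\tfrac12$, and hence $|x|<|v|$ (for $x=0$ the inequality $|x|\le|v|$ is trivial). This is the crucial geometric fact: a sufficiently strong magnetic field confines the spatial support inside the velocity support.

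Next I would use the energy cutoff. From $\E^\a(x,v)<E_0$ together with the growth bound $|\upphi(r)|\le c_* r$ we obtain $\tfrac12 m_\a|v|^2 < E_0 - \a e\,\upphi(|x|)\le E_0 + ec_*|x|$, so that $|v|^2 < \tfrac{2}{m}(E_0+ec_*|x|)$ after dividing by $\tfrac12 m_\a\ge \tfrac12 m$. Combining this with $|x|<|v|$ gives the closed quadratic inequality $|x|^2 < \tfrac{2}{m}(E_0+ec_*|x|)$, i.e. $|x|^2 - \tfrac{2ec_*}{m}|x| - \tfrac{2E_0}{m}<0$. Its positive root is exactly the number $R$ defined by \eqref{CER}, so $|x|<R$. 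Feeding $|x|<R$ back into the velocity bound and invoking the identity $R^2 = \tfrac{2}{m}(E_0+ec_*R)$, which one reads off by rewriting \eqref{CER}, yields $|v|^2 < R^2$. Thus $f^\a(x,v)\neq 0$ forces both $|x|<R$ and $|v|<R$, which is the claimed compact support with radius $R$.

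Finally, the statement about $\uprho^\a$ is immediate: for $r>R$ every point $(x,v)$ with $|x|=r$ lies outside the support of $f^\a$, so the velocity integral defining $\uprho^\a(r)=g^\a(r,\upphi(r))$ vanishes. Since $\upphi$ is continuous and $\sigma_1^\a,\sigma_2^\a\in C^1_b$, the function $f^\a$ is continuous and bounded; together with compact support this gives $\|f^\a\|_{L^1(\RR^4)}<\infty$, i.e. finite charge. I expect the only delicate point to be the bookkeeping that ties the positive root of the quadratic to the defining relation \eqref{CER} for $R$; everything else is a direct chain of elementary estimates once the strong-field inequality $\tfrac{cm_\a}{e\beta}\le\tfrac12$ is in hand.
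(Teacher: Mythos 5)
Your proof is correct and rests on exactly the same two ingredients as the paper's: the strong-field bound $\tfrac{cm_\a}{e\beta}\le\tfrac12$, which forces $|x|\le|v|$ on the support of $f^\a$ (the paper states this contrapositively, as $|v|<|x|\Rightarrow\P^\a(x,v)>0$), and the energy cutoff combined with $|\upphi(r)|\le c_*r$, which produces the quadratic inequality whose positive root is the $R$ of \eqref{CER}. The only difference is organizational --- you argue directly on the support, derive $|x|<R$ first and feed it back through the identity $R^2=\tfrac2m\left(E_0+ec_*R\right)$ to get $|v|<R$, whereas the paper runs a case distinction on $|v|\ge|x|$ versus $|v|<|x|$ and bounds $|v|$ first --- so this is the paper's argument rearranged, not a different route.
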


\bigskip

\begin{proof}
	Let $x,v\in\RR^2$ be arbitrary. First we assume that $|v|\ge|x|$. Then 
	\begin{align*}
	\E^\a(x,v) \ge \tfrac 1 2 m |v|^2 - e\big|\upphi(|x|)\big| \ge \tfrac 1 2 m |v|^2 -e c_* |v| 
	\end{align*}
	Hence $\E^\a(x,v)>E_0$ if $|v|\ge |x|$ and
	\begin{align*}
	|v|^2 - 2\tfrac{ec_*}{m} |v| + \big(\tfrac{ec_*}{m}\big)^2 > \tfrac 2 m E_0+\big(\tfrac{ec_*}{m}\big)^2 \quad\Leftrightarrow\quad |v|> \tfrac{ec_*}{m} + \sqrt{\tfrac{2}{m}E_0+\big(\tfrac{ec_*}{m}\big)^2} = R.
	\end{align*}
	From condition \eqref{COND:SIGMA} we can conclude that
	\begin{align*}
	f^\a(x,v) = 0 \quad \text{if}\;\; |v|>R \;\;\text{and}\;\; |v|\ge |x|.
	\end{align*}
	For any $(x,v)\in\RR^4$ with $|v|<|x|$, we can use \eqref{COND:BETA} to obtain that
	\begin{align}
	\label{EST:PA}
	\P^\a(x,v) \ge \tfrac 1 2 |x|^2 - \tfrac{cm_\a}{e\beta} |x||v| \ge \tfrac 1 2 |x|^2 - \tfrac 1 2 |x||v| \ge \tfrac 1 4 |x|^2 - \tfrac 1 4 |v|^2 > 0
	\end{align}
	and thus 
	\begin{align*}
	f^\a(x,v) = 0 \quad \text{if}\;\; |v|>R .
	\end{align*}
	Let us now assume that $|v|\le R$. Then \eqref{EST:PA} yields
	\begin{align}
	\P^\a(x,v) \ge \tfrac 1 4 |x|^2 - \tfrac 1 4 R^2.
	\end{align}
	Consequently $\P^\a(x,v)$ is positive if $|x|>R$ and $|v|\le R$ and by condition \eqref{COND:SIGMA} this finally implies that
	\begin{align*}
	f^\a(x,v) = 0 \quad \text{if}\;\; |v|>R \;\; \text{or}\;\; |x|>R.
	\end{align*}
	It immediately follows that $f^\a$ has finite charge and that $g^\a\big(r,\upphi(r)\big)=\uprho^\a(r)=0$ if $r>R$.
\end{proof}

\bigskip

Now, Proposition \ref{PROP:CSUPP} can be used to construct a suitable solution of the integral equation \eqref{ANS:POT} which in turn satisfies the growth condition of Proposition \ref{PROP:CSUPP}:

\begin{proposition}
	\label{PROP:PHI}
	Let $R>0$ be arbitrary and suppose that $\sigma^\a,\; \a=\pm 1$ satisfy condition \eqref{COND:SIGMA}. Let $c_0 \ge 0$ be defined by
	\begin{align*}
	c_0 := 8\pi^2 e\suma \tfrac{1}{m_\a} \|\Sigma^\a\|_{L^1(\RR)}.
	\end{align*}
	and let $E_0>0$ be the solution of 
	\begin{align}
	\label{EQ:E0}
	\tfrac{e c_0R}{m} + \sqrt{\tfrac 2 m E_0 + \big(\tfrac{e c_0R}{m}\big)^2} = R. 
	\end{align} 
	Then the integral equation \eqref{ANS:POT}, that is
	\begin{align*}
	\upphi(r) =-4\pi e \int\limits_0^r \frac 1 s \int\limits_0^s \tau \, g\big(\tau,\upphi(\tau)\big) \ds,\quad r\ge 0,
	\end{align*}
	has a unique solution $\upphi\in C([0,\infty[)$. This solution is even twice continuously differentiable and satisfies $|\upphi(r)|\le (c_0R)r$ for all $r\ge 0$. Moreover, it holds that $g\big(r,\upphi(r)\big)=0$ if $r>R$.
\end{proposition}

\bigskip

\begin{proof}
	First note that $E_0$ is well defined because of the intermediate value theorem and the strict monotonicity of the left-hand side of \eqref{EQ:E0} with respect to $E_0$. We consider a recursive sequence that is defined by
	\begin{align*}
	\upphi_0(r) := 0, \qquad \upphi_{n+1}(r) := -4\pi e\int\limits_0^r \frac 1 s \int\limits_0^s \tau g\big(\tau,\upphi_n(\tau)\big) \dtau \ds,\;\; n\in\NN \\[-8mm]
	\end{align*}
	for all $r\ge 0$.\\[1ex]
	
	\pagebreak[2]
	
	\textit{Assertion 1.} For any $n\in\NN_0$:\vspace{-1mm}
	\begin{itemize}
		\itemi $\upphi_n \in C([0,\infty[)$ and $|\upphi_n(r)| \le (c_0R)\, r$ for all $r\ge 0$. 
		\itemii $g\big(r,\upphi_n(r)\big)=0$ for all $r>R$.
	\end{itemize}
	
	\textit{Proof of Assertion 1.} For $n=0$ assertion (i) is obvious and Proposition~\ref{PROP:CSUPP} implies (ii). Now suppose that $n$-th iterate is already constructed and satisfies (i) and (ii). Then the function $r\mapsto g(r,\upphi_n(r))$ is continuous with $g(r,\upphi_n(r))=0$ for all $r>R$ and hence $\upphi_{n+1} \in C([0,\infty[)$ easily follows.  Moreover, from the definition of $g$ and condition \eqref{COND:SIGMA} we infer that
	\begin{align*}
	\big|g^\a\big(\tau,\upphi_n(\tau)\big)\big| 
	\le \frac{1}{m_\a}
	\int\limits_{\a e\varphi}^{E_0}  
	\int\limits_0^{2\pi} \Sigma^\a(E) \;\mathrm d\theta \mathrm dE 
	\le \frac{2\pi}{m_\a} \|\Sigma^\a\|_{L^1(\RR)}
	\qquad\text{if}\;\; r\ge 0 \;\;\text{and}\;\;\a e \varphi < E_0.
	\end{align*} 
	This implies that 
	\begin{align*}
	|\upphi_{n+1}(r)| \le 4\pi e \int\limits_0^r\int\limits_0^R \big|g\big(\tau,\upphi_n(\tau)\big)\big| \dtau \le (c_0 R)\, r,\quad r\ge 0.
	\end{align*} 
	Again, item (ii) follows directly from Proposition~\ref{PROP:CSUPP} with $c_*=c_0R$.  This proves Assertion 1 by induction.\\[1ex]
	
	\textit{Assertion 2.} The sequence $(\upphi_n)_{n\in\NN}$ is uniformly Cauchy on any interval $[0,\delta]$ with $\delta>0$. 
	
	\medskip
	
	\textit{Proof of Assertion 2.} First note that, for $\a=\pm 1$, 
	\begin{align*}
	g^\a(r,\varphi) = 
	\tfrac{1}{m_\a}
	\int\limits_{0}^{E_0-\a e \varphi}
	\int\limits_0^{2\pi} \sigma^\a\Big(E+\a e \varphi \comma \tfrac 1 2 r^2 + \tfrac{\a cm_\a}{e\beta}\, r \sqrt{\tfrac{2}{m_\a}E}\,\sin\theta\Big) \,\mathrm d\theta \mathrm dE 
	\qquad\text{if}\;\; r\ge 0 \;\;\text{and}\;\;\a e \varphi < E_0,
	\end{align*} 
	and $g^\a(r,\varphi) =0$ else. Hence, by Leibnitz's rule we can conclude that the function $(r,\varphi)\mapsto g^\a(r,\varphi)$ is partially differentiable with respect to $\varphi$ with 
	\begin{align*}
	\partial_\varphi g^\a(r,\varphi) &=
	\tfrac{\a e}{ m_\a} \, \int\limits_{0}^{E_0-\a e \varphi} 
	\int\limits_0^{2\pi} \partial_\eta \sigma^\a\Big(E+\a e\varphi \comma \tfrac 1 2 r^2 + \tfrac{\a cm_\a}{e\beta}\, r \sqrt{\tfrac{2}{m_\a}E}\,\sin\theta\Big)  \,\mathrm d\theta \mathrm dE 
	\qquad\text{if}\;\; r> 0 \;\;\text{and}\;\;\a e \varphi < E_0
	\end{align*} 
	and $\partial_\varphi g^\a(r,\varphi) = 0$ if $r<0$ or $\a e\varphi > E_0$. One can easily see that this partial derivative is actually continuous on $\RR^2$ and satisfies the estimate 
	\begin{align*}
	\big| \partial_\varphi g^\a(r,\varphi) \big| 
	\;\le\; \frac{2\pi e}{m_\a}(E_0+  e |\varphi|)\, \|\sigma^\a\|_{C^1_b(\RR)}.
	\end{align*} 
	Thus, for any $n\in\NN$ and $(r,\varphi)\in\RR^2$ with $r\ge 0$ and $\varphi$ between $\upphi_n(r)$ and $\upphi_{n-1}(r)$, Assertion 1(i) implies that 
	\begin{align*}
	\big| \partial_\varphi g(r,\varphi) \big| & \le \frac{2\pi e}{m_\a}(E_0+  e |\varphi|)\suma \|\sigma^\a\|_{C^1_b(\RR)} 
	\le \underbrace{\frac{2\pi e}{m_\a} \max\big\{E_0, e c_0 R\big\}\suma \|\sigma^\a\|_{C^1_b(\RR)}}_{=:L} \; (1+ r).
	\end{align*} 
	Hence, the mean value theorem implies that for all $r\in[0,\delta]$ and all $n\in\NN$,
	\begin{align*}
	\big| \upphi_{n+1}(r) - \upphi_n(r) \big| 
	& \le 4\pi e \int\limits_0^r \int\limits_0^s \big| g\big(\tau,\upphi_{n}(\tau)\big) - g\big(\tau,\upphi_{n-1}(\tau)\big) \big| \dtau \ds\\
	&\le 4\pi e L (1+\delta) \int\limits_0^r \int\limits_0^s \big| \upphi_{n}(\tau) - \upphi_{n-1}(\tau) \big| \dtau \ds.
	\end{align*}
	Since $| \upphi_{1}(r) - \upphi_0(r) | \le (c_0R)\, r$ for all $r\in [0,\delta]$ it follows by induction that
	\begin{align*}
	\big| \upphi_{n+1}(r) - \upphi_n(r) \big| \le c_0R  \frac{\big(4\pi e L (1+\delta)\big)^n}{(2n+1)!} \, r^{2n+1}, \quad r\in [0,\delta], n\in\NN.
	\end{align*}
	Without loss of generality, we assume that $4\pi e L (1+\delta) \ge 1$. Now let $\eps>0$ be arbitrary and let $m,n,N\in\NN$ with $m>n\ge N$. We obtain
	\begin{align*}
	&\big| \upphi_{m}(r) - \upphi_n(r) \big| \le c_0R  \sum_{k=n}^{m-1} \frac{\big(4\pi e L (1+\delta)\big)^k}{(2k+1)!}\, r^{2k+1} \\
	& \qquad \le c_0R  \sum_{k=N}^\infty \frac{\big(4\pi e L (1+\delta)\big)^{2k+1}}{(2k+1)!}\, r^{2k+1} 
	\le c_0R  \sum_{k=N}^\infty \frac{\big(4\pi e L \delta (1+\delta)\big)^k}{k!} 
	<\eps
	\end{align*} 
	for all $r\in[0,\delta]$ if $N$ is sufficiently large as the series is convergent. This proves Assertion 2.\\[1ex]
	
	\textit{Assertion 3.} There exists some function $\upphi \in C([0,\infty[)$ such that $\upphi_n \to \upphi$ uniformly on any compact subinterval $[0,\delta]\subset [0,\infty[$ with $\delta>0$. This function $\upphi$ has the following additional properties:
	\begin{itemize}
		\itemi $\upphi$ satisfies the integral equation \eqref{ANS:POT} on $[0,\infty[$.
		\itemii  $|\upphi(r)|\le (c_0R)r$ for all $r\ge 0$. 
		\itemiii $\upphi\in C^2([0,\infty[)$.
		\itemiv $g\big(r,\upphi(r)\big)=0$ for all $r>R$.
	\end{itemize}
	
	\textit{Proof of Assertion 3.} Assertion 2 particularly implies that for any $r\ge 0$, the sequence $(\upphi_n(r))_{n\in\NN}$ is a Cauchy sequence in $\RR$ and thus convergent.  Hence, the function $\upphi$ can be defined in a pointwise sense by
	\begin{align*}
	\upphi(r):=\underset{n\to\infty}{\lim} \upphi_n(r),\quad r\ge 0,
	\end{align*}  
	Moreover, for any $\delta>0$ we can conclude from Assertion 2 and the completeness of $C([0,\delta])$ that $\upphi_n\vert_{[0,\delta]}$ converges to $\upphi\vert_{[0,\delta]} $ uniformly on $[0,\delta]$ and therefore $\upphi\vert_{[0,\delta]} \in C([0,\delta])$. As this holds for every $\delta>0$, it follows that $\upphi\in C([0,\infty[)$. By Lebesgue's convergence theorem and the continuity of $g$, we obtain that
	\begin{align*}
	\upphi(r) = \underset{n\to\infty}{\lim} \upphi_{n+1}(r) = \underset{n\to\infty}{\lim} \left\{-4\pi e\int\limits_0^r  \frac 1 s \int\limits_0^s \tau g\big(\tau,\upphi_{n-1}(\tau)\big) \dtau \ds\right\} 
	= -4\pi e \int\limits_0^r \frac 1 s \int\limits_0^s \tau g\big(\tau,\upphi(\tau)\big) \dtau \ds,\quad r\ge 0,
	\end{align*}
	which proves (i). Moreover, $\upphi$ is bounded by 
	\begin{align*}
	|\upphi(r)| \le \underset{n\to\infty}{\lim\sup}\; |\upphi_n(r)| \le (c_0R)\, r, \quad r\ge 0
	\end{align*} 
	that is (ii). Since $r\mapsto \upphi(r)$ is continuous on $[0,\infty[$, so is $r\mapsto g(r,\upphi(r))$. Hence, $\upphi$ is continuously differentiable on $[0,\infty[$ and its derivative satisfies the integro-differential equation\vspace{-2mm}
	\begin{align*}
	\upphi'(0)=0, \qquad \upphi'(r) = -\frac{4\pi e}{r} \int\limits_0^r \tau g\big(\tau,\upphi(\tau)\big) \dtau,\quad r>0.
	\end{align*}
	Now, $\upphi'$ is even once more continuously differentiable on $[0,\infty[$ with
	\begin{align*}
	\upphi''(r) &= \frac{4\pi e}{r^2} \int\limits_0^r \tau g\big(\tau,\upphi(\tau)\big) \dtau - 4\pi e\, g\big(r,\upphi(r)\big), \quad r>0,\\
	\upphi''(0) &= \underset{r\underset{>}{\to} 0}{\lim}\; \upphi''(r) =-2\pi e\, g(0,0)
	\end{align*}
	where the second line follows from L'Hôpital's rule and the continuity of $g$ and $\upphi$.
	This proves (iii). Finally, (iv) follows directly from (ii) and Proposition \ref{PROP:CSUPP} with $c_*=c_0R$. Thus Assertion 3 is established.\\[1ex]
	
	\textit{Assertion 4.} The function $\upphi$ from Assertion 3 is the unique solution of the integral equation \eqref{ANS:POT}.\\[1ex]
	
	\textit{Proof of Assertion 4.} We assume that the integral equation \eqref{ANS:POT} has another solution $\uppsi \in C([0,\infty[)$.  From the definition of $g^\a$ and condition \eqref{COND:SIGMA} we infer that 
	\begin{align*}
	\big|g^\a\big(\tau,\uppsi(\tau)\big)\big| \le \frac{2\pi}{m_\a}\, \|\Sigma^\a\|_{L^1(\RR)} ,\quad \tau\ge 0
	\end{align*} 
	and thus 
	\begin{align*}
	|\uppsi(r)| \le 4\pi e \int\limits_0^r\int\limits_0^R \big|g\big(\tau,\uppsi(\tau)\big)\big| \dtau \le (c_0 R)r,\quad r\ge 0.
	\end{align*} 
	Now, analogously to the proof of Assertion 2, we can conclude that for any arbitrary $\delta>0$ and all $r\in[0,\delta]$,
	\begin{align*}
	\big| \upphi(r) - \uppsi(r) \big| 
	\le 4\pi e L (1+\delta) \int\limits_0^r \int\limits_0^s \big| \upphi(\tau) - \uppsi(\tau) \big| \dtau \ds 
	\le 4\pi e L \delta (1+\delta) \int\limits_0^r  \big| \upphi(\tau) - \uppsi(\tau) \big| \dtau .
	\end{align*}
	Hence, Gronwall's lemma implies that $\upphi=\uppsi$ on $[0,\delta]$ and, since $\delta$ was arbitrary, this yields $\upphi=\uppsi$ on $[0,\infty[$. This proves Assertion 4 and completes the proof of Proposition \ref{PROP:PHI}.
\end{proof}

\smallskip

Now, Proposition \ref{PROP:CSUPP} and Proposition \ref{PROP:PHI} can be used to prove the main result of this paper: 

\begin{theorem}
	\label{THM:MR}
	Let $R>0$ be arbitrary and suppose that $\sigma^\a,\; \a=\pm 1$ satisfy condition \eqref{COND:SIGMA}. Let $E_0>0$ and $c_0\ge 0$ be as defined in Proposition \ref{PROP:PHI} and let ${\upphi\in C^2([0,\infty[)}$ denote the unique solution of the integral equation \eqref{ANS:POT} on $[0,\infty[$, i.e.,
	\begin{align*}
	\upphi(r) =-4\pi e \int\limits_0^r \frac 1 s \int\limits_0^s \tau \, g\big(\tau,\upphi(\tau)\big) \ds,\quad r\ge 0 .
	\end{align*}
	with $g$ as defined above (see \eqref{DEF:G}). Then the ansatz
	\begin{align*}
	f^\a(x,v) = \sigma^\a \big(\E(x,v) \comma \P^\a(x,v)\big), \quad a=\pm 1
	\end{align*} 
	defines a steady state of the Vlasov-Poisson system that is compactly supported with radius $R$ and has finite charge. It even holds that both $f^-$ and $f^+$ have positive charge, i.e.,
	\begin{align*}
	0<\|f^\a\|_{L^1(\RR^4)} < \infty,\quad \a=\pm 1
	\end{align*}
	if $e_0$ and $p_0$ (from condition \eqref{COND:SIGMA}) are  sufficiently small real numbers. 
\end{theorem}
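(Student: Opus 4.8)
The plan is to assemble the two preceding propositions and then treat the positivity of the charge, which is the only genuinely new ingredient. Proposition~\ref{PROP:PHI} already furnishes a unique $\upphi\in C^2([0,\infty[)$ solving the integral equation \eqref{ANS:POT} together with the linear growth bound $|\upphi(r)|\le (c_0RE_0)\,r$. With this $\upphi$ fixed, the ansatz $f^\a=\sigma_1^\a(\E^\a)\,\sigma_2^\a(\P^\a)$ is well defined, and -- as already observed in the text preceding \eqref{ANS} -- the product and chain rules show that $f^\a$ solves the stationary Vlasov equation \eqref{CSVE}, while the change of variables leading to \eqref{EQ:RHOG} identifies its charge density as $\uprho^\a=g^\a(\cdot,\upphi(\cdot))$. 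Because $\upphi$ solves \eqref{ANS:POT}, Poisson's equation \eqref{CSPE} holds; together with the regularity $\sigma_i^\a\in C^1$, $\E^\a,\P^\a$ smooth and $\upphi\in C^2$, this verifies conditions (i)--(iii) of the Definition, so $f^\a$ is a steady state.

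For the support and the finiteness of charge I would invoke Proposition~\ref{PROP:CSUPP} with $c_*:=c_0RE_0$. The constant $E_0$ was defined in Proposition~\ref{PROP:PHI} precisely so that the triple $(E_0,c_*,R)$ solves the compatibility relation \eqref{CER}; since $\upphi$ obeys $|\upphi(r)|\le c_* r$, Proposition~\ref{PROP:CSUPP} applies verbatim and yields that each $f^\a$ is compactly supported with radius $R$ and has finite charge, with $\uprho^\a(r)=0$ for $r>R$.

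It remains to show $\|f^\a\|_{L^1(\RR^4)}>0$ for both $\a$. Since $f^\a$ is continuous and nonnegative, it suffices to produce a single point at which $f^\a>0$; continuity then gives an open neighbourhood on which $f^\a>0$, forcing $\|f^\a\|_{L^1(\RR^4)}>0$. By \eqref{COND:SIGMA} one has $\sigma_1^\a(\eta)\,\sigma_2^\a(p)>0$ whenever $e_0<\eta<E_0$ and $p_0<p<0$, so I only need a configuration $(x_0,v_0)$ with $\E^\a(x_0,v_0)\in(e_0,E_0)$ and $\P^\a(x_0,v_0)\in(p_0,0)$, all inequalities strict. I would take $|x_0|=r$ small, orient $v_0$ so that $x_0\cdot v_0^\bot=-\a\,ru$ (whence the magnetic term in $\P^\a$ equals $-\tfrac{cm_\a}{e\beta}ru$, so that $\P^\a=\tfrac12 r^2-\tfrac{cm_\a}{e\beta}ru$), and choose $u=|v_0|$ so that $\tfrac12 m_\a u^2$ lies strictly inside $(e_0,E_0)$. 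The growth bound gives $|\upphi(r)|\le c_* r\to0$, so $\E^\a=\tfrac12 m_\a u^2+\a e\,\upphi(r)$ still lies in $(e_0,E_0)$ once $r$ is small; and for $u$ fixed and $r$ small one has $\tfrac{cm_\a}{e\beta}ru>\tfrac12 r^2$, hence $\P^\a<0$, while $\P^\a\to 0^-$ so that $\P^\a>p_0$. This is exactly where the hypothesis that $e_0$ and $p_0$ be sufficiently small enters: it guarantees that the window $(e_0,E_0)\times(p_0,0)$ is available for the above choice. The cases $\a=+1$ and $\a=-1$ are handled with the two opposite orientations of $v_0$ (the required sign of $x_0\cdot v_0^\bot$ flips with $\a$), so both $f^+$ and $f^-$ acquire positive charge.

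The routine parts are the differentiation identities and the verification that the constants match \eqref{CER}; the only point requiring genuine care is the positivity step, where one must check that the energy and angular-momentum windows can be met simultaneously. The subtlety there is the interplay between the lower bound $u>\tfrac{e\beta r}{2cm_\a}$ forced by $\P^\a<0$ (which, since $\beta\ge 2c\bar m/e$, amounts to $u\gtrsim r$) and the energy band $\tfrac12 m_\a u^2\in(e_0,E_0)$; letting $r\to 0$ decouples the two, and the smallness of $e_0$ and $p_0$ supplies the remaining room.
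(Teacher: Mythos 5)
Your proposal is correct. The first two thirds (steady-state property via Proposition \ref{PROP:PHI} and \eqref{EQ:RHOG}, then compact support and finite charge via Proposition \ref{PROP:CSUPP} with $c_*=c_0RE_0$) coincide with the paper's proof. Where you diverge is the positivity of the charge: the paper bounds $\|f^\a\|_{L^1}$ from below by an explicit integral over a box $M=\{r_0<r<2r_0,\ u_0<u<2u_0,\ \theta_0<\theta<\theta_0+\tfrac\pi2\}$ with the coupled scaling $u_0=\sqrt{r_0}$, chooses $\theta_0$ depending on $\a$ so that $\a\sin\theta<-\tfrac{\sqrt2}{2}$, and then verifies the two-sided bounds \eqref{COND:CC1}--\eqref{COND:CC2}; because the kinetic energy on that box is of order $r_0$ and tied to the spatial scale, the paper genuinely needs $e_0$ (and $p_0$) small. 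Your argument instead fixes $u$ so that $\tfrac12 m_\a u^2$ sits strictly inside $(\max(e_0,0),E_0)$ (nonempty because $E_0>0$ and $e_0<E_0$), picks the orientation $x_0\cdot v_0^\bot=-\a ru$, and lets $r\to0$, so that $\E^\a\to\tfrac12 m_\a u^2$ and $\P^\a\to0^-$; a single point of positivity plus continuity of $f^\a$ then gives $\|f^\a\|_{L^1}>0$. This decoupling of the velocity scale from the spatial scale is cleaner and in fact proves positivity for \emph{every} admissible $e_0<E_0$, $p_0<0$, i.e.\ the smallness hypothesis in the theorem becomes superfluous — which is strictly stronger than (and implies) the stated conditional claim. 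One small inaccuracy in your commentary: you say the smallness of $e_0,p_0$ "guarantees that the window $(e_0,E_0)\times(p_0,0)$ is available", but under \eqref{COND:SIGMA} that window is always nonempty, and your construction never actually uses the smallness; this does not affect correctness, it only means your closing remark misattributes the role of a hypothesis your proof does not need.
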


\noindent\textit{Comment.} To avoid misunderstandings, we remind the reader that $p_0<0$ and also $e_0$ might be negative. Therefore, in contrast to frequent usage, the phrase ``sufficiently small'' does not mean that $p_0$ and $e_0$ are close to zero.

\smallskip

\begin{proof}
	The solution $\upphi$ of the integral equation \eqref{ANS:POT} is well defined by Proposition \ref{PROP:PHI}. As already mentioned, $f^\a$ is a solution of the stationary Vlasov equation. According to \eqref{EQ:RHOG} the corresponding volume charge density is given by $\uprho^\a(r)=g^\a\big(r,\upphi(r)\big)$ and thus, by its construction, $\upphi$ is the corresponding self-consistent electric potential. This means that the pair $(f^\a)_{\a=\pm 1}$ is a steady state of the Vlasov-Poisson system. Note that Proposition \ref{PROP:PHI} also implies that 
	\begin{align*}
	|\upphi(r)| \le (c_0R) r, \quad r\ge 0.
	\end{align*} 
	Then, it follows from Proposition \ref{PROP:CSUPP} that, for both $\a=\pm 1$, $f^\a$ is compactly supported with $\supp f^\a \subset\BR\times\BR$ and thus, the steady state has finite charge. Moreover, for $\a=+1$, 
	\begin{align}
	\|f^\a\|_{L^1(\RR^4)} &= 2\pi \int\limits_0^\infty \uprho^\a(r)\, r \;\mathrm dr  
	= 2\pi\int\limits_0^\infty \int\limits_0^\infty \int\limits_0^{2\pi} \sigma^\a\big(\text{E}^\a(r,u,\theta) \comma \text{P}^\a(r,u,\theta) \big)\, ur\, \mathrm d\theta\mathrm du\mathrm dr\notag\\
	\label{EST:CHG}
	& \ge 2\pi \int\limits_{\mathcal M} \sigma^\a\big(\text{E}^\a(r,u,\theta) \comma \text{P}^\a(r,u,\theta) \big)\, ur\; \mathrm d(r,u,\theta)
	\end{align} 
	where 
	\begin{align*}
	{\mathcal M}:=\Big\{ (r,u,\theta)\in\RR^3 \;\big\vert\; r_0<r<2r_0,\;\; u_0<u<2u_0,\;\; \theta_0<\theta<\theta_0+\tfrac \pi 2 \Big\}
	\end{align*}
	with $r_0,u_0>0$ and $0<\theta_0<\tfrac{3\pi}{2}$. We choose 
	\begin{align*}
	u_0:=r_0^{1/4} \tand \theta_0:=\left\{
	\begin{aligned} 
	&\tfrac \pi 4, &&\text{if}\;\a=-1 \phantom{\big\vert}\\
	&\tfrac {5\pi}{4}, &&\text{if}\;\a=1.
	\end{aligned}
	\right.
	\end{align*} 
	Then, for both $\a=\pm 1$, $$\a\sin\theta < -\tfrac 1 2 \sqrt{2}, \quad\text{if}\quad \theta_0<\theta<\theta_0+\frac \pi 2$$ and hence, if $r_0>0$ and $p_0<0$ are small enough, 
	\begin{align}
	\label{COND:CC1}
	\textstyle
	p_0\,<\, 
	\frac 1 2 r_0^2 - 4 \frac{c\bar m}{e\beta}\, r_0^{5/4} \,<\,
	\text{P}^\a(r,u,\theta) \,<\, 2 r_0^2 - \frac 1 2\sqrt{2}\, \frac{c m}{e\beta} r_0^{5/4} 
	\,<\, 0,\qquad (r,u,\theta)\in {\mathcal M}.
	\end{align} 
	Furthermore, 
	\begin{align}
	\label{COND:CC2}
	\textstyle 
	e_0 \,<\,
	\frac 1 2 m\, r_0^{1/2} -  2ec_0Rr_0 \,<\,
	\text{E}^\a(r,u,\theta) \,<\, 2\bar m\, r_0^{1/2} + 2ec_0Rr_0 \,<\,E_0, 
	\qquad (r,u,\theta)\in {\mathcal M}.
	\end{align} 
	if $r_0>0$ and $e_0<E_0$ are sufficiently small. This means that, assuming that $p_0$ and $e_0$ are  sufficiently small real numbers, we can fix $r_0$ such that 
	\begin{align*}
	\sigma^\a\big(\text{E}^\a(r,u,\theta) \comma \text{P}^\a(r,u,\theta) \big)>0 \quad\text{for all}\quad (r,u,\theta)\in {\mathcal M}
	\end{align*} 
	which directly implies that $\|f^\a\|_{L^1(\RR^4)}>0$ because of \eqref{EST:CHG} and the fact that ${\mathcal M}$ has positive measure. This completes the proof of Theorem \ref{THM:MR}.	
\end{proof}
\bigskip

By this theorem we have established the existence of compactly supported steady states even with a \textit{nontrivial} self-consistent electric potential. If $e_0$ and $p_0$ are sufficiently small, the constructed steady state describes a true two-component plasma as each component has positive total charge. If we choose a radius $R>0$ that is smaller than the radius of the reaction chamber, the plasma that is described by such a steady state is confined in the cylindrical reactor since it keeps a certain distance to the reactor wall.

\section*{Acknowledgement} 
Special thanks to Gerhard Rein for helpful discussion. This work does not have any conflicts of interest.

\scriptsize
\bibliography{mmas_ssvp}

\end{document}